\newtheorem{theorem}{Theorem}[section]
\newtheorem{definition}[theorem]{Definition}
\newtheorem{proposition}[theorem]{Proposition}
\newtheorem{corollary}[theorem]{Corollary}
\newtheorem{lemma}[theorem]{Lemma}
\newenvironment{proof}{\paragraph{Proof:}}{\hfill$\square$}
\newcommand{\N}{\mathbb{N}}
\title{On the Tensor Representation and Algebraic Homomorphism of \\the Neural State Turing Machine}
\author{%
  Ankur Mali \\
  University of South Florida\\
  \texttt{ankurarjunmali@usf.edu}
  \And
  Alexander Ororbia \\
  Rochester Institute of Technology \\
  \texttt{ago@cs.rit.edu}
  \AND
  Daniel Kifer \\
  Pennsylvania State University \\
  \texttt{duk17@psu.edu}
  \And
  Lee Giles \\
  Pennsylvania State University \\
  \texttt{clg20@psu.edu}  
}
\begin{document}

\maketitle

\begin{abstract}
Recurrent neural networks (RNNs) and transformers have been shown to be Turing-complete, but this result assumes infinite precision in their hidden representations, positional encodings for transformers, and unbounded computation time in general. In practical applications, however, it is crucial to have real-time models that can recognize Turing complete grammars in a single pass. To address this issue and to better understand the true computational power of artificial neural networks (ANNs), we introduce a new class of recurrent models called the neural state Turing machine (NSTM). The NSTM has bounded weights and finite-precision connections and can simulate any Turing Machine in real-time. In contrast to prior work that assumes unbounded time and precision in weights, to demonstrate equivalence with TMs, we prove that a $13$-neuron bounded tensor RNN, coupled with third-order synapses, can model any TM class in real-time. Furthermore, under the Markov assumption, we provide a new theoretical bound for a non-recurrent  network augmented with memory, showing that a tensor feedforward network with $25$th-order finite precision weights is equivalent to a universal TM. 
\end{abstract}

\section{Introduction}
\label{sec:intro}

There has been a growing interest in improving the memory capability of artificial neural networks when learning algorithms or data structures from examples more efficiently \cite{joulin2015inferring, mali2021recognizing2,mali2020neural, grefenstette2015learning, graves2014neural, graves2016hybrid}. Most artificial neural networks (ANNs) that have been proposed incorporate some type of recurrence \cite{siegelmann94} or attention with positional encodings \cite{attn_turing} and can be seen as Turing complete (TC) by using unbounded memory, weights, and precision. In contrast, recently proposed computational models inspired by Turing machines (TMs), such as the neural Turing machine (NTM) \cite{graves2014neural} and the differentiable neural computer (DNC) \cite{graves2016hybrid}, are not Turing complete in their current form due to the fact that they operate with a fixed-sized memory module and, as a result, can only process space-bounded TMs. An essential requirement for making these models universal, with the capability of processing algorithms of arbitrary length with varied complexity, is to prove an equivalence between such ANNs and Turing machines. 

In contrast to the above, prior work has established a different perspective on deriving the Turing completeness of ANNs. In particular, Siegelman and Sontag \cite{siegelmann95} showed that recurrent neural networks (RNNs) with arbitrary precision and unbounded weights are TC if the model satisfies two particular conditions: 
\textbf{1)} the RNNs can compute or extract dense representations of the data, and 
\textbf{2)} the RNN implements a mechanism to retrieve these stored (dense) representations. 
Notably, this early work demonstrated that, without complex gating and only increasing the model's computational complexity, one can successfully derive the computational power of RNNs. Recently, these types of proofs have been extended to other types of ANNs without recurrence, e.g., transformers and the neural GPU \cite{perez2019turing}. 

The key limitation of the above previously proven TM equivalence for non-recurrent models, such as transformers, is that the underlying proof assumes that the model weights have infinite precision and that the input strings should be encoded as position(al) encodings (in a one-hot encoding format) when provided as input to the ANN. This means that, for an input string alphabet of size one trillion, one would need to create a one-hot vector with a dimensionality of at least one trillion, which is computationally prohibitive. Furthermore, arbitrary precision and weights make it difficult to derive the true computational power of RNNs, thus limiting applicability of these networks to real-time, complex learning scenarios. A similar formal construction has been used to show that bounded RNNs with a growing memory module \cite{chung2021turing} or augmented with a stack-like structure \cite{mali2021recognizing, nndpa1998sun, grefenstette2015learning, joulin2015inferring} are Turing complete and can therefore simulate a TM in O$(T^4)$ with $54$ neurons ($T$ represents the mapping defined by the RNN with parameters $W$ and $b$), which is unrealistic because it assumes unbounded-time and cannot simulate TM in finite or bounded time. However, in this paper, we will formally show that we only require \emph{N} finite precision state neurons to simulate a TM in real time or finite time.

Recently, \cite{stogin2020provably} showed that tensor RNNs are Turing equivalent with floating point weights and finite time. It is worth noting that Turing equivalence is a much stronger property compared to Turing completeness (TC). While Turing equivalent machines can represent all TM operations, TC, on the other hand, means that the underlying machine can compute some operations computed by the TM. Thus, simulating Turing machines by RNNs with tensor connections offers practical and computational benefits \cite{stogin2020provably,mali2020neural}, extending our understanding of the neuro-symbolic capabilities of RNN systems. Furthermore, these machines are \textbf{interpretable} by design, as rules can be easily encoded into the states of these neural networks. This is an important direction toward constructing responsible AI systems with formal guarantees. 

To this end, we proposed a new model, known as the neural state Turing machine (NSTM), which can encode the transition or production rules of a recursively enumerable grammar into the weights of an RNN. This offers multiple benefits, such as: 
1) \textbf{Computational benefit:} The number of states in the TM is equivalent to the number of weights in NSTM, 
2) \textbf{Rule insertion and interpretability:} One can add the transition rule of a recursively enumerable grammar directly into weights of an NSTM, thus making the model interpretable by design, and, 
3) \textbf{Explainability:} One can extract rules as well as the underlying state machine from a well-trained NSTM, thus making the model explainable through extraction. Crucially, by showing the equivalence between a bounded-precision NSTM and a TM that can run any algorithm, the NSTM represents a practical and computationally efficient method for symbolic processing capabilities. 
Furthermore, this equivalence can help us to more deeply understand the  computational limitations of ANNs as well as provide a pathway towards model interpretability and explainability. 
Notably, the NSTM can also be seen as a logical extension of the neural state pushdown automata \cite{mali2020neural}.

This work makes the following core contributions:
\begin{itemize}[noitemsep,nolistsep]
    \item We formally prove the Turing equivalence of a neural state Turing machine (NSTM) with bounded precision weights and time, 
    \item We extend the universal approximation theorem for ANNs to show that, with more complexity, one can avoid recurrence and still get a bounded precision ANN that is also Turing equivalent, 
    \item We formally prove that one can encode a transition table and table symbols into the weights of an NSTM -- we also show that an NSTM with six neurons can simulate a universal Turing machine, and, 
    \item We prove that one can simulate TM in real-time and that the NSTM, by design, is interpretable since the number of states in the NSTM is equivalent to a TM.
\end{itemize}
The remainder of the paper is structured in the following manner. Section \ref{sec:background} provides the background, notation, and the Turing machine (TM) definition. Section \ref{sec:bounded_nstm} defines the NSTM architecture and its equivalence with a TM. It further defines the NSTM architecture without recurrence and shows its equivalence to the TM by introducing higher-order connections that can encode a complete transition table into weighted connections. Section \ref{sec:experiments} presents experiments on grammatical inference tasks. Finally, we discuss limitations as well as the societal impact of our work.

\section{Background and Notation}
\label{sec:background}

A Turing machine (TM) is a 7-tuple \cite{hopcroft2001introduction} defined as:  $M=\langle Q,\Gamma,b,\Sigma,\delta,q_s,F \rangle$, where $Q=\{q_1(=q_s),q_2,\ldots,q_n\}$ is the set of finite states, $\Gamma =\{s_0,s_1,\ldots,s_m\}$ is the input alphabet set including $b=s_0$, $\Sigma$ is a finite set of tape symbols, $\delta: \Gamma\times Q \to \Gamma\times Q\times \{-1,1\}$ is the transition function or rule for the grammar, $q_s \in Q$ is the initial starting state, and $F \subset Q$ is the set of final states. Let the cells on the tape be indexed by the integers $1,2,3,\ldots$ according to their positions on the tape. The configuration of a TM is typically defined as a tuple consisting of the index of a cell, the symbol written in a cell, the current state of the controller, and the position or location of the read/write head. In the appendix, we provide a notation table that lists all symbols used in this work along with their definitions.

\section{Bounded Neural State Turing Machines}
\label{sec:bounded_nstm}

The design of the neural state Turing machine (NSTM) is inspired by neural state pushdown automata (NSPDA) \cite{mali2020neural} and, as a result, consists of fully-connected set of recurrent neurons labeled as \emph{state neurons} (primarily to distinguish them from neurons that function as output neurons). Introducing the concept of state neurons is important when considering tensor networks since they are composed of higher-order recurrent weights, which stands in strong contrast to first-order RNNs. Higher-order weights facilitate a direct mapping of the transition rules available in a TM grammar to the recurrent weights of NSTM itself. 

An NSTM is an artificial neural network (ANN) consisting of $n$ neurons, that are ideally equivalent to the number of states in a Turing machine. The value of neuron ${i+1}$ at time ${t+1} \in {2,3...}$ is denoted as $z^{i+1}_{t+1} \in \mathbb{Q}$, where $\mathbb{Q}$ is the set of rational numbers, and is computed by an affine transformation (or tensor multiplication) of the the values of the neurons in the current state followed by an non-linear activation function $\sigma$, formally represented as follows:
\begin{align}
    {s}_{t+1}^{i+1} &= \sigma(\Sigma_{j,k,l}W_{ijkl}^s(z_t^j,r_t^k,x_t^l)+b_s^i) \label{eqn:state}
    \\
    a_{t+1}^{i+1} &= \sigma(\Sigma_{j,k,l}W_{ijkl}^{a}(z_t^j,r_t^k,x_t^l)+b_s^i) \label{eqn:action}
    \\
    {z}_{t+1}^{i+1} &= \sigma(\Sigma_{j,k,l}W_{ijkl}^s(z_t^j,r_t^k,x_t^l) a^{t+1}_{i+1}+b_z^i) 
    \\
    {z}_{t+1}^{i+1} &= \sigma(\Sigma_{j,k,l}s^{i}_{t} a^{t+1}_{i+1}+b_z^i) 
\end{align}
where $z_{t+1} \in \mathbb{Q}$ $W^s$ and $W^{a}$ are both 4-dimensional (4D) weight tensors, i.e., the binary ``to-state'' tensor $W^s \in \{0,1\}^{J \times L \times L \times L}$ and the 4D tenary to-action tensor $W^a \in \{0,1\}^{J \times L \times L \times L}$. $s$ is the state neuron, ${a}$ is the action neuron that can be used as a tape symbol or neurons that store local information \footnote{Note that we will later show how one can manipulate the action neuron itself.}, $z$ is obtained by applying an affine transformation over $s$ and ${s}$, $b$ denotes the biases, $x$ is the input neuron, and $r$ denotes the read neuron used to determine position of tape head. For simplicity, we consider the saturated-linear function in this work, formally defined as follows:
\begin{align}
    \sigma(z) &:=
    \left\{ 
    \begin{array}{rl}
      0 & \text{if}\quad z < 0 \\ 
      z & \text{if}\quad 0 \leq z \leq 1 \\ 
      1 & \text{if}\quad x > 1 .
    \end{array}
    \right.\  \label{eqn:read}
\end{align}
Therefore, $z(t+1) \in (\mathbb{Q} \cap [0,1]^n)$ for all $t > 0$. The notation presented here is also summarized in the Appendix, for convenience. We will also show that our proof holds true even when the logistic sigmoid function is applied to the states of the NSTM. To achieve this, we will re-state the following lemma \cite{stogin2020provably}:
\begin{lemma} \label{stack_op_lem_2}
  Let $\bar{Z}$ be a state tensor with components $\bar{Z}_i\in\{0,1\}$, let $\epsilon_0<\frac{1}{2}$, and let $Z$ be a tensor satisfying:
  $$||Z-\bar{Z}||_{L^\infty} \le \epsilon_0.$$

  \noindent
  For any number $H$, let $h_H(x)=\frac{1}{1+e^{-Hx}}$ be the sigmoid function with scale $H$. Then, for all sufficiently small $\epsilon > 0$ and for all sufficiently large $H$ (the exact value depends on $\epsilon_0$ and $\epsilon$),
  $$\max_i\left|\bar{Z}_i -h_H\left(Z_i-\frac12\right)\right| \le \epsilon.$$
\end{lemma}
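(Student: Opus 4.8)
The plan is to reduce the claim to an elementary tail estimate for the logistic function, exploiting the fact that the two‑valued constraint $\bar{Z}_i\in\{0,1\}$ forces each $Z_i$ to stay a fixed distance away from the threshold $\tfrac12$.

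First I would set $\delta:=\tfrac12-\epsilon_0$, which is strictly positive by hypothesis, and combine $\|Z-\bar Z\|_{L^\infty}\le\epsilon_0$ with the fact that $\bar Z_i\in\{0,1\}$ to obtain the dichotomy: if $\bar Z_i=0$ then $Z_i-\tfrac12\le\epsilon_0-\tfrac12=-\delta$, while if $\bar Z_i=1$ then $Z_i-\tfrac12\ge(1-\epsilon_0)-\tfrac12=\delta$. Hence for every coordinate $i$ we have $|Z_i-\tfrac12|\ge\delta$, and the sign of $Z_i-\tfrac12$ equals $2\bar Z_i-1$.

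Next I would invoke monotonicity of $h_H$ together with the explicit tail bounds $1-h_H(x)=\tfrac{e^{-Hx}}{1+e^{-Hx}}\le e^{-Hx}$ valid for $x\ge 0$, and $h_H(x)=\tfrac{1}{1+e^{-Hx}}\le e^{Hx}$ valid for $x\le 0$. Evaluating these at $x=Z_i-\tfrac12$ in each branch of the dichotomy yields, uniformly in $i$, the bound $\bigl|\bar Z_i-h_H(Z_i-\tfrac12)\bigr|\le e^{-H\delta}$. Given any $\epsilon>0$, it then suffices to pick $H\ge\delta^{-1}\ln(1/\epsilon)=\bigl(\tfrac12-\epsilon_0\bigr)^{-1}\ln(1/\epsilon)$ so that $e^{-H\delta}\le\epsilon$, which proves the lemma and exhibits exactly the dependence of $H$ on $(\epsilon_0,\epsilon)$ the statement requests; since the estimate in fact holds for all $\epsilon>0$, the ``sufficiently small $\epsilon$'' qualifier is automatically satisfied.

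I do not expect a genuine obstacle here: the argument is a one‑line case split followed by the standard exponential bound on the logistic tails. The only points requiring care are bookkeeping the direction of the inequalities so that $\operatorname{sign}(Z_i-\tfrac12)$ is correctly matched to the target value $\bar Z_i$, and keeping the threshold on $H$ explicit. A softer alternative would be to combine the pointwise limit $h_H(x)\to\mathbf 1\{x>0\}$ for $x\neq 0$ with the observation that the arguments $Z_i-\tfrac12$ are uniformly separated from $0$, but the quantitative version above is both shorter and gives a sharper, constructive bound.
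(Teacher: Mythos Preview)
Your proposal is correct and follows essentially the same approach as the paper: both split into the two cases $\bar Z_i=0$ and $\bar Z_i=1$, use the $L^\infty$ bound to show that $Z_i-\tfrac12$ lies on the correct side of $0$ with margin $\tfrac12-\epsilon_0$, and then invoke monotonicity of $h_H$ to conclude. Your version is a bit more quantitative, giving the explicit tail estimate $e^{-H\delta}$ and an explicit threshold $H\ge(\tfrac12-\epsilon_0)^{-1}\ln(1/\epsilon)$, whereas the paper simply asserts that a sufficiently large $H$ exists.
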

Note that $h_H(0)=\frac12$ and $h_H(x)$ decreases to $0$ as $Hx\rightarrow -\infty$ and increases to $1$ as $Hx\rightarrow\infty$. The scalar $H$ is a sensitivity parameter. In practice, depending on the sign of the input $x$, we assume $H$ to be a positive constant such that $h_H(x)$ is as close as desired values either $0$ or $1$. 
\begin{proof}
For simplicity, we will assume that $\bar{Z}$ to be an tensor represented in vector form.

Since $\epsilon_0<\frac12$, choose $H$ sufficiently large such that 
  $$h_H\left(-\left(\frac12-\epsilon_0\right)\right) = 1-h_H\left(\frac12-\epsilon_0\right) \le \epsilon.$$

  \noindent
  Importantly, note that we have
  \begin{align*}
     \left|\bar{Z}_i -h_H\left(Z_i-\frac12\right)\right| = \left|\bar{Z}_i -h_H\left(\bar{Z}_i-\frac12+(Z_i-\bar{Z}_i)\right)\right| 
  \end{align*}
  Now pick an arbitrary index $i$. If at that position we set $\bar{Z}_i=1$, then the following condition is obtained: 
  \begin{align*}
  \left|\bar{Z}_i-h_H\left(Z_i-\frac12\right)\right| = 1-h_H\left(\frac12+(Z_i-\bar{Z}_i)\right) \le 1-h_H\left(\frac12-\epsilon_0\right) \le \epsilon.
  \end{align*}
  
  \noindent
  However, if at position $i$ we set $\bar{Z}_i=0$, then we obtain the following condition:
  \begin{align*}
    \left|\bar{Z}_i-h_H\left(Z_i-\frac12\right)\right| = h_H\left(-\frac12+(Z_i-\bar{Z}_i)\right) 
    \le h_H\left(-\left(\frac12-\epsilon_0\right)\right)
    \le \epsilon.  
  \end{align*}
  In both cases, we observe that vector $\bar{Z}$ is close to the ideal vector Z; thus proving the lemma. 
\end{proof}

\subsection{Representing NSTMs as Turing Machines}
\label{sec:nstm_as_tm}

Recall that $M$ is a 7-tuple representing a Turing machine (TM). Given any instance of $M$, let $s_{t+1}$ be the configuration at the $t+1$ step of the TM. Given this, it is straightforward to represent the configuration $s_{t+1}$ as a characteristic tensor in the range $[0,1]$. Therefore, the configuration is ${s}_{t+1} = 1$ if and only if the ${i}^{th}$ cell contains the symbol $s_j$, the state of the controller is currently at $q_k$, and the tape head is currently pointing to the $l^{th}$ cell. Otherwise, we simply represent the configuration as $s_{t+1} = 0$.

Due to the availability of higher-order connections/weights $W^s_{ijkl}$, it can be seen that the above configuration at $s_{t+1}$ contains four pieces of information that define a transition or the next step of the TM. At index $i$, we encode the index or position of a cell. At index $j$, we encode the symbol that should be written into a cell. The state of the controller is encoded at index $k$ and, finally, index $l$ represents the position of the head. As seen by the TM's definition, we denote the information processed at index $i$ and $j$ as local information since they are dependent upon cell information while $k$ and $l$ are responsible for encoding global 
information. To show how our network can capture both local and global information, we introduce a new state known as the dead or halting state, denoted as $q_0$. This slightly modifies the definition of $Q$ which is now written as $ Q^*=Q\cup\{q_0\} $, thus extending the definition of $\delta$ to  
$\Gamma\times Q^* \to \Gamma\times Q^*\times \{-1,0,1\}$ which is represented as:
\[
	\delta(s,q)=
        \begin{cases}
		\delta(s,q), &\text{ if }q\in Q\\
		(s,q_0,0),   &\text{ if }q=q_0 \mbox{.}
	\end{cases}
\]
Now one needs to represent $\delta=(\delta_1,\delta_2,\delta_3)$, where $\delta_1: \Gamma\times Q^* \to \Gamma $, $\delta_2: \Gamma\times Q^* \to Q^*$, and $\delta_3: \Gamma\times Q \to \{-1,0,1\}$ can also be modelled using the hyperbolic tangent (tanh) activation function. Further note that we denote $[k]=\{0,1,2\ldots,k\}$. 

In this work, we treat an action neuron as a tensor that stores the characteristics or evolution of the Turing machine. Thus, $a_{t+1}$ can be represented as a tensor combination of the current and future states, when using the logistic sigmoid or a saturated function $a_{t+1} = \sigma(a_t)$. Another way of looking at this tensor is by using an $8^{th}$ order tensor with a threshold logic gate such that $a_{t+1}=(Wa^{i_t,j_t,k_t,l_t}_{i_{t+1}, j_{t+1}, k_{t+1}, l_{t+1}})$, where $i_t, i_{t+1} \in \mathbb{N}^+, l_t, l_{t+1} \in \mathbb{N}^+, j_t, j_{t+1} \in [m]$ and $k_t, k_{t+1} \in [n]$, $t$ represents the current configuration and $t+1$ represents future or upcoming configuration(s). Now this representation can model the evolution of a TM with various activation functions. In the appendix, we will also show that, under a fixed point analysis, the NSTM is guaranteed to find stable solutions when the transition rules are well-defined or are represented by the recurrent weights. 

The configuration (or action) neuron is $a_{t+1} = 1$ if it satisfies two conditions. First, $i_t \neq l_t, i_{t+1} = i_t, j_{t+1} = j_t, k_{t+1}=0, k_t \neq 0, l_{t+1} = l_{t}$ and, second,  $i_{t} = l_{t}, i_{t+1}=i_t, j_{t+1}= \delta_1(j_{t}, k_t), k_{t+1}=\delta_2(j_t,k_t), l_{t+1}=l_t+\delta_3(j_t,k_t)$ or else $a_{t+1} = 0$. The first condition checks whether the cell is in inactive states or not based on positions of the head while the second condition is responsible for checking if the current cell is active. However, this only satisfies a local information criterion, since, based on the second condition, when $i_t = l_t$, the current cell is in an active state and a future/upcoming state can be easily determined. As a result, in the scenario when the state is inactive such that $i_t \neq l_t$, one can only determine the cell index and the symbol written on the tape. In these scenarios, when global information and the full transition to the next state are not fully determined, we set the next state to be equivalent to halting state $q_0$. In other words, we set the halting state to be equivalent to the next state when $a_{t+1}=1$ and $k_t \neq 0$. One key idea is that the halting state only interferes when global information is unavailable. This state eventually vanishes in the next step, ensuring that it is not involved in further calculations. The significance of a halting state is that it can easily deal with illegal configurations, which is critical for dealing with imbalanced datasets and adversarial samples. 

Next, the state neurons $s$ and action neurons $a$ are combined and an affine transformation is applied to them in order to create binary operation of tensors $z$, which models a TM's evolution. This is represented as:
\begin{align}
    \sigma(s_{t} a_{t+1}) \triangleq \sum_{i_t,j_t,k_t,l_t} (Ws_{i_t,j_t,k_t,l_t}) a_{t+1}.
\end{align}
If we replace $\sigma$ with a threshold activation function and make $a_{t+1}$ equivalent to an $8^{th}$ order tensor, as shown earlier, we can use \textbf{Einstein notation} \cite{einstein1916grundlage} to model two tensors such that: 
\begin{align}
\begin{split}
    (Ws_{i_t, j_t, k_t, l_t})(Wa^{i_t,j_t,k_t,l_t}_{i_{t+1}, j_{t+1}, k_{t+1}, l_{t+1}}) \\ \triangleq 
  \sum_{i_t,j_t,k_t,l_t} (Ws_{i_t,j_t,k_t,l_t})(Wa^{i_t,j_t,k_t,l_t}_{i_{t+1}, j_{t+1}, k_{t+1}, l_{t+1}}) .
\end{split}
\end{align}
Prior work \cite{einstein1916foundation} has shown that these summations are well-defined and obey both cumulative and distributive laws across tensors. 

Now let us define the product of two tensors and describe further how to model the local and global information of a TM with an NSTM. We show that this definition holds true for both the sigmoid activation as well as a threshold logic gate (when applied over an $4^{th}$ order tensor connection). 
\begin{definition}\label{def:Type1prod}
Let $s_t = $ $Ws_{i_t,j_t,k_t,l_t}$ or $\Sigma_{j,k,l}W_{ijkl}^s(z_t^j,r_t^k,x_t^l)$, 
$a^l_{t+1} = \sigma( Wa^l_{i_t,j_t,k_t,l_t} ) \; \mbox{ or } \;  \sigma( \Sigma_{j,k,l}Wa_{ijkl}^l(z_t^j,r_t^k,x_t^l))$ \; $\mbox{ and } \; $  $a^g_{t+1} =\sigma( Wa^g_{i_t,j_t,k_t,l_t} ) \; \mbox{ or  } \; \sigma( \Sigma_{j,k,l}Wa_{ijkl}^g(z_t^j,r_t^k,x_t^l))$ be a tensor of order 4. We define the Type 1 tensor product of $s_t$ and $a_{t+1}$ to be $z_t = s_t \times_1 a_{t+1}$, where $\sigma$ or $h_H$ is a sigmoid activation function and $z_t$ is a tensor of $4^{th}$ order such that: 
\begin{align*}
    z_t = \sigma \Bigg( \bigg( \sum_{k_{t+1}, l_{t+1}}s_t Wa^l_{i_t, j_t, k_{t+1}, l_{t+1}} \bigg)
    \bigg( \sum_{i_{t+1}, j_{t+1}}s_t Wa^g_{i_{t+1}, j_{t+1}, k_{t}, l_{t}} \bigg) \Bigg) .
\end{align*}
Similarly, we can show that the evolution of an NSTM closely follows Einstein notation even with an $8^{th}$ order tensor and, furthermore, closely follows the associative property. 
As such, the same tensor stores current and future information. Therefore, the above equations can be written as: 
\begin{align*}
    z_t = \sigma \Bigg( \bigg( \sum_{k_{t+1}, l_{t+1}} Ws_{i_t, j_t, k_t, l_t} Wa^{i_t,j_t,k_t,l_t}_{i_{t}, j_{t}, k_{t+1}, l_{t+1}} \bigg)  \bigg( \sum_{i_{t+1}, j_{t+1}} Ws_{i_t, j_t, k_t, l_t} Wa^{i_t,j_t,k_t,l_t}_{i_{t+1}, j_{t+1}, k_{t}, l_{t}} \bigg) \Bigg). \nonumber
\end{align*}
\end{definition}
As we can see from Definition \ref{def:Type1prod}, the first summation shows the interaction between the state and action neurons, which captures the local information,  while the second summation captures the global information. When the state neuron $s_t$ is represented as a $4^{th}$ order tensor such that $k \in [n]$, then $s|_{k \ne k_i}$ indicates the tensor obtained from $s_t$ by deleting the $k$ index at $k_i$. For instance, $s|_{k \ne 0}$ denotes the tensor obtained by deleting entries from $s_t$ with an index of 0.

This next theorem is essential and shows how the NSTM construction is close to that of a TM.

\begin{theorem}\label{thm:product_and_DTM}
Let $M$ be a TM such that $z_t$ represents the $t^{th}$ configuration of $M$ at a given instance. Let $a_{t+1}$ be the evolution tensor in the NSTM that stores the characteristic of $M$ by interacting with a nonlinear activation  ($\sigma$). Suppose that the state neurons of NSTM $s_t$ are stored in a $4^{th}$ order tensor such that $s_t|_{k\neq 0} = s(z_t)$, then let us define $s_{t+1} = s_t \times_1 a_{t+1}$ for $t=1,2, \ldots.$. Show that $s(z_t) = s_t|_{k \neq 0}$. 
\end{theorem}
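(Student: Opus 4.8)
The plan is to prove the statement by induction on $t$, reading it as the claim that the invariant $s_t|_{k\neq 0} = s(z_t)$ holds for every $t\ge 1$, with the hypothesis of the theorem serving as the base case and the recursion $s_{t+1}=s_t\times_1 a_{t+1}$ being shown to carry the invariant one step forward. So everything reduces to the inductive step: assuming $s_t|_{k\neq 0}=s(z_t)$, establish $s_{t+1}|_{k\neq 0}=s(z_{t+1})$, where $z_{t+1}=\delta(z_t)$ is the configuration of $M$ after one move. First I would make the configuration-to-tensor correspondence precise: a configuration $z_t$ is the list of occupied cells with their symbols together with the controller state $q_k$ and head position $l$, and $s(z_t)$ is the $4^{th}$-order $\{0,1\}$-tensor with a $1$ at $(i,j,k,l)$ exactly when cell $i$ holds symbol $s_j$ in $z_t$ — so all nonzero entries share the same $k$ and $l$. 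Under the inductive hypothesis the $k\neq 0$ slice of $s_t$ is exactly this tensor; the $k=0$ slice is harmless, since both clauses defining $a_{t+1}$ require $k_t\neq 0$, so those entries never feed the $\times_1$ product, and by the extended rule $\delta(s,q_0)=(s,q_0,0)$ they would in any case remain at $k=0$ and be deleted by $|_{k\neq 0}$.

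Next I would unfold $s_{t+1}=s_t\times_1 a_{t+1}$ via Definition~\ref{def:Type1prod} and case-split exactly along the two clauses defining $a_{t+1}$. For the active cell $i=l$ (clause $i_t=l_t$), the ``local'' factor $\sum_{k_{t+1},l_{t+1}}s_t\,Wa^l$ rewrites its symbol from $j$ to $\delta_1(j,k)$, while the ``global'' factor $\sum_{i_{t+1},j_{t+1}}s_t\,Wa^g$ reads off the new state $\delta_2(j,k)$ and the new head location $l+\delta_3(j,k)$; their product deposits a single $1$ at $(l,\delta_1(j,k),\delta_2(j,k),l+\delta_3(j,k))$. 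For an inactive cell $i\neq l$ (clause $i_t\neq l_t$), the local factor keeps the symbol fixed, and the global factor — being a function of $(k_t,l_t)$ only, hence constant across $(i_{t+1},j_{t+1})$ — broadcasts the same new $(k',l')$ obtained from the unique active cell, so that cell contributes a $1$ at $(i,j,k',l')$, the temporary $k_{t+1}=0$ value of clause~1 being overridden by the (nonzero) global state.

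Collecting the two cases, the nonzero support of $s_{t+1}$ is precisely $\{(i,j',k',l')\ :\ \text{cell }i\text{ holds }j'\text{ in }z_{t+1}\}$ with the common $(k',l')$ of $z_{t+1}$, i.e. $s_{t+1}|_{k\neq 0}=s(z_{t+1})$. The $k_{t+1}=0$ entries from clause~1 persist only when the global factor is zero — exactly when $z_t$ is halting or the configuration is illegal (head off the tape, $\delta$ undefined) — and in that degenerate case $|_{k\neq 0}$ strips them, leaving the empty/halted tensor, which is consistently what $s(z_{t+1})$ is defined to be. Finally I would remark that Lemma~\ref{stack_op_lem_2} makes the whole argument robust to replacing the saturated-linear $\sigma$ by a sufficiently sharp sigmoid $h_H$, since every intermediate quantity stays within some $\epsilon_0<\tfrac12$ of $\{0,1\}$; the same reasoning goes through verbatim for the $8^{th}$-order evolution tensor with a threshold gate.

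The step I expect to be the main obstacle is making the Einstein-notation manipulation of the $\times_1$ product rigorous enough to license the ``broadcast'' claim: that because the global factor depends only on $(k_t,l_t)$ it is constant in the cell and symbol indices, so the new controller state and head position computed at the one active cell get attached simultaneously to every cell, and that this override is exactly what makes the halting state $q_0$ ``vanish'' at the inactive cells. One has to verify there is a unique active cell (well-definedness of the head position in a legal configuration), that contracting $Wa^l$ and $Wa^g$ against $s_t$ isolates precisely that cell's $(\delta_1,\delta_2,\delta_3)$ values, and that no cross terms between distinct cells appear in the product; cleanly handling the degenerate configurations — where the invariant must still hold with both sides equal to the zero/halted tensor — is the other fiddly part.
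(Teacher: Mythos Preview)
Your proposal is correct and follows essentially the same route as the paper: induction on $t$, with the inductive step unwound through Definition~\ref{def:Type1prod} by arguing that each of the two factors (local and global) in the $\times_1$ product is $\{0,1\}$-valued and equals $1$ precisely at the entries dictated by the TM configuration $z_{t+1}$; the paper likewise notes the sigmoid case via Lemma~\ref{stack_op_lem_2}. Your treatment is in fact more explicit than the paper's own proof --- you spell out the active/inactive cell case split and the ``broadcast'' of the global $(k',l')$ to all cells, whereas the paper handles these implicitly in a terser argument --- but the underlying mechanism is identical.
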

\begin{proof}
We are interested in showing that the interaction between the state neurons $s_t$ and the product of state and action neurons $z_t$ at the current time step and the next time step lie within a similar range. We will follow proof by induction, where we only need to prove the following conditions:  if $s_t|_{k \ne 0} = s(z_t)$, then at the next time-step $s(z_{t+1}) = s_{t+1}|_{k \ne 0}$. It is crucial to note that this proof is valid for both threshold activation functions and sigmoidal functions. In the case of the sigmoid, we will show $s(z_{t+1}) = s_{t+1}|_{k \ne 0} + \epsilon$, where $\epsilon$ is as small as possible and the solution lies within a fixed point threshold. 
We know $s_t = \sigma(Ws_{i_t, j_t, k_t, l_t})$ for all $t \in \N^+$. Here we only need to show that, for $\forall k\neq 0$, the state neuron at the next time-step $s^{(t+1)}_{i,j,k,l} \neq 0$ if and only if both summations equate or are equal to $1$. 
For the condition when $(\left(\sum_{k_{t+1}, l_{t+1}}s_t Wa^l_{i_t, j_t, k_{t+1}, l_{t+1}} \right) \neq 0$, then by definition of $a_t$, the local tensor $a^l_{t+1} \neq 0$ when $i_{t+1} = i$ and $k_t \neq 0$. Since $s_t|k neq 0$ stores the configuration of the TM ($M$) at $t$ and can also be derived from the NSTM final state $z_t$, there then exists only one condition where $s_t = 1$ when $i_{t} = i_{t+1}$ and $k /neq 0$, thus making $(\left(\sum_{k_{t+1}, l_{t+1}}s_t Wa^l_{i_t, j_t, k_{t+1}, l_{t+1}} \right) = 1$. 

Based on the above formulation, which creates a binary and sparse representation, it is easy to determine that the non-zero term will indicate that the $i^{th}$ cell contains the symbol $s_j$ in $z_{t+1}$. In a similar fashion, we can retrieve the global information by checking that $\left(\sum_{i_{t+1}, j_{t+1}}s_t Wa^g_{i_{t+1}, j_{t+1}, k_{t}, l_{t}} \right) = 1$, if and only if the final state of the NSTM is in $z_{t+1}$ and the current state based on the transition rule is in $q_{k_t}$ and, furthermore, that the position of the head is in the $l^{th}_t$ cell. This completes the proof and shows that the NSTM closely follows the transition evolution of a TM $M$. The above proof also shows that the NSTM can compute any TM grammar in real-time.
\end{proof}

Next, we will derive the lower bound of an NSTM that recognizes a small universal Turing machine (UTM). Prior work has proposed four small UTMs \cite{neary2007four}, where one such configuration contains $6$ states and $4$ symbols that can simulate any Turing machine in time O($T^6$) (where $T$ represents number of steps that the TM requires to compute a final output). As shown there, the $UTM_{6,4}$ or $UTM_{7,3}$ is a TM. We now use Theorem \ref{thm:product_and_DTM} to simulate a UTM, thus showing the minimal number of neurons required by a NSTM to recognize any TM grammar. 
\begin{corollary}
There exists a $6$ neuron unbounded-precision NSTM that can simulate any TM in O($T^6$), where $T$ is the total number of turns or steps required by the TM to compute the final output.
\end{corollary}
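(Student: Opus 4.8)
The plan is to combine Theorem~\ref{thm:product_and_DTM} with the existence of small universal Turing machines, so that the corollary reduces to a counting argument over states. First I would invoke the construction of Neary and Woods~\cite{neary2007four}, fixing the concrete machine $UTM_{6,4}$: a genuine Turing machine with $|Q|=6$ states and $|\Gamma|=4$ tape symbols that simulates an arbitrary TM running in $T$ steps within $O(T^6)$ of its own steps. Since $UTM_{6,4}$ is itself a $7$-tuple $M$ in the sense of Section~\ref{sec:background}, Theorem~\ref{thm:product_and_DTM} applies to it verbatim, giving an NSTM that tracks its configuration tensor $z_t$ step for step.

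Next I would carry out the encoding from Section~\ref{sec:nstm_as_tm} specialized to $UTM_{6,4}$: represent each configuration as the characteristic tensor $z_t$, with the controller-state index $k$ ranging over $[6]$ (the six real states, with the halting state $q_0$ folded in as in the extended definition of $\delta$, since $q_0$ vanishes on the next step and therefore costs no additional state neuron), the written-symbol index $j$ ranging over $[4]$, and $i,l$ the cell and head indices. Then I would hard-wire the $4$-dimensional tensors $W^s$ and $W^a$ so that, for every pair $(s,q)$ appearing in the finite transition table of $UTM_{6,4}$, the Type~1 product $s_{t+1}=s_t\times_1 a_{t+1}$ of Definition~\ref{def:Type1prod} reproduces $\delta=(\delta_1,\delta_2,\delta_3)$; Theorem~\ref{thm:product_and_DTM} then guarantees $s(z_{t+1})=s_{t+1}|_{k\neq 0}$, i.e. one NSTM update faithfully implements one $UTM_{6,4}$ step. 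Because the controller uses exactly six distinct $k$-values, exactly $6$ state neurons suffice, which is precisely where the claimed bound is pinned down.

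Finally I would chain the two layers of simulation: a target TM running in $T$ steps is simulated by $UTM_{6,4}$ in $c\,T^6$ steps for a constant $c$, and by the construction above each such step corresponds to a single NSTM update; hence the $6$-neuron NSTM reaches the halting/accepting configuration within $O(T^6)$ updates. Unbounded precision is needed only to store the growing tape contents inside the action/read tensors, so it is the count of genuine \emph{state} neurons that is controlled — matching the ``$6$ neuron unbounded-precision NSTM'' in the statement.

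The main obstacle I anticipate is bookkeeping rather than conceptual: verifying that the auxiliary machinery — the local and global action neurons $a^l,a^g$, the read neuron $r$, and the halting state $q_0$ — can all be absorbed into tensor indices without inflating the number of true state neurons beyond $6$, and checking that the enlarged head-movement set $\{-1,0,1\}$ is correctly realized through $l_{t+1}=l_t+\delta_3(j_t,k_t)$ inside the global summation of Definition~\ref{def:Type1prod}. Once that accounting is confirmed, the corollary follows immediately from Theorem~\ref{thm:product_and_DTM} applied to $UTM_{6,4}$.
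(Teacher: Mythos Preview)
Your proposal is correct and follows essentially the same route as the paper: invoke the Neary--Woods $UTM_{6,4}$, apply Theorem~\ref{thm:product_and_DTM} to that specific machine so that the six controller states map to six state neurons, and chain the two simulations to obtain the $O(T^6)$ bound. The paper's own argument is in fact terser than yours --- it is contained in the paragraph immediately preceding the corollary plus one sentence after it --- so your write-up supplies detail the paper leaves implicit, but the underlying idea is identical.
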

Assume that $s_t$ inserts a transition rule as described by our construction and also models the evolution similar to $a_t$. However, when we add an additional state (which is termed the ``halting'' state), with characteristics of a TM or an action neuron, we obtain bounded precision. The rest of the construction follows from Section \ref{thm:product_and_DTM}. 
\begin{corollary}
There exists a $13$ neuron bounded NSTM that can simulate any TM in O($T^6$), where $T$ is the total number of turns or steps required by the Turing machine to compute a final output.
\end{corollary}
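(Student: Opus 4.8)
The plan is to combine the tensor encoding of Theorem~\ref{thm:product_and_DTM} with a fixed small universal machine and then account for the extra neurons needed to force bounded precision. First I would fix $M$ to be the universal Turing machine $\mathrm{UTM}_{6,4}$ of Neary and Woods, which has $|Q|=6$ states, $4$ symbols, and simulates an arbitrary $T$-step Turing machine in $O(T^6)$ steps. By Theorem~\ref{thm:product_and_DTM}, its transition function $\delta=(\delta_1,\delta_2,\delta_3)$ can be written directly into the third-order weight tensors $W^s$ and $W^a$ of an NSTM: the state neurons $s_t$ hold the characteristic tensor of the current configuration (cell index $i$, written symbol $j$, control state $k$, head position $l$), the action neurons $a_{t+1}$ store the evolution (``characteristic'') tensor of $M$, and $z_{t+1}=s_t\times_1 a_{t+1}$ reproduces one step of $M$. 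Since Theorem~\ref{thm:product_and_DTM} already gives $s(z_{t+1})=s_{t+1}|_{k\neq 0}$ at every step, a single NSTM update simulates a single $\mathrm{UTM}_{6,4}$ step, so the whole simulation terminates in $O(T^6)$ NSTM steps.

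Second, I would carry out the neuron count that yields $13$. Six state neurons are required, one per state of $Q$. Extending $Q$ to $Q^{*}=Q\cup\{q_0\}$ adds the single halting/dead state neuron $q_0$ used to absorb configurations whose global information is not yet determined. Finally, the local/global split of Definition~\ref{def:Type1prod} (the tensors $a^{l}$ and $a^{g}$) needs the accompanying action neurons, six of them — one per state — so that the product $\bigl(\sum s_t\, Wa^{l}\bigr)\bigl(\sum s_t\, Wa^{g}\bigr)$ can be formed. This gives $6+1+6=13$ neurons, matching the claimed bound.

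Third, and this is the crux, I would argue that the halting-state construction makes the precision bounded. In the $6$-neuron version of the preceding corollary the configuration is packed Siegelmann--Sontag style into rationals whose denominators grow with the tape length, hence the unbounded precision. Here instead every neuron holds an entry of a $\{0,1\}$-valued characteristic tensor: the first condition in the definition of $a_{t+1}$ freezes inactive cells, the second advances the unique active cell $i_t=l_t$, and whenever $k_t\neq 0$ but global information is missing we set the next control state to $q_0$, which then vanishes on the following update. I would check that under the saturated-linear $\sigma$ all neuron values remain exactly in $\{0,1\}$, and that under the logistic sigmoid they remain within $\epsilon$ of $\{0,1\}$ by invoking Lemma~\ref{stack_op_lem_2} with a sufficiently large sensitivity $H$; either way the number of bits per neuron is a constant independent of $T$, which is precisely the statement that the NSTM is bounded. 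The remainder of the construction is inherited verbatim from Theorem~\ref{thm:product_and_DTM}.

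The main obstacle I anticipate is not the arithmetic but making the ``bounded'' claim airtight: one must verify that representing an unbounded tape with a fixed set of $13$ neurons does not smuggle in unbounded precision through the tensor indices $i,l\in\N^{+}$ — i.e., that the growth of the configuration is carried entirely by the finite transition weights acting on a sparse binary tensor, not by the magnitude of any single neuron's value — and that the halting state genuinely clears itself in one step so that illegal or undetermined configurations never accumulate. The sigmoid case additionally requires tracking how the error $\epsilon$ of Lemma~\ref{stack_op_lem_2} propagates over $O(T^6)$ compositions without amplifying, which I would control by choosing $H$ large relative to $\epsilon_0<\tfrac12$ so that each update is contracting back toward $\{0,1\}$ (a fixed-point argument), keeping the cumulative deviation below $\epsilon$ at every step.
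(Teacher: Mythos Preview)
Your proposal is correct and follows the same line as the paper: instantiate the construction of Theorem~\ref{thm:product_and_DTM} on the Neary--Woods $\mathrm{UTM}_{6,4}$, then obtain boundedness by separating the evolution into its own action tensor $a_t$ and adjoining the halting state $q_0$, arriving at $6+6+1=13$ neurons. The paper's own argument is the single sentence ``the only difference here is the addition of a halting state and that the evolution of the Turing machine is modeled using an additional tensor $a_t$,'' so your write-up is in fact more explicit than the original---in particular, the obstacles you flag (unbounded tape indices $i,l\in\N^{+}$ potentially reintroducing unbounded precision, and the $\epsilon$-propagation over $O(T^{6})$ sigmoid updates) are not addressed in the paper either and would need to be discharged to make the claim fully rigorous.
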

The only difference here is the addition of a halting state and that the evolution of Turing machine is modeled using an additional tensor $a_t$. It should be noted that prior results \cite{chung2021turing,perez2019turing} do not function in real-time and, even with $O(T^6)$ simulation time, the number of neurons required by a bounded RNN with growing memory is $54$ \cite{chung2021turing} as opposed to the $13$ required by our construction. However, \cite{stogin2020provably} works in real-time but uses two stacks.

\section{Turing Completeness of an NSTM with an n-th Order Tensor }
\label{sec:turing_completeness}

Next, we show how the evolution of an NSTM closely resembles a Markov process and that a  UTM can be derived from an NSTM using linear activations. In other words, the evolution of $UTM = NSTM$ with a linear activation function and $UTM = NSTM + \epsilon$ with a sigmoid function ($h_H$), for a sufficiently larger value of $H$ along with an $\epsilon$ that is as small as possible, as shown in Lemma \ref{stack_op_lem_2}. 
We use transition matrices to prove that the NSTM evolution is close to a Markov process. In an earlier description, we showed how a state neuron $s_t$ models the configurations of $M$ by interacting with the states and the tape based on a defined transition rule. The transition between the current and next state of a TM is then modeled using an action neuron $a_{t+1}$. We also define the final state neuron $z_t$ responsible for modelling one step of evolution of a TM using a Type 1 tensor. Here, we show that our results can be generalized across the order of tensor connections and can also be modelled using Type 2 tensor products. Since we know that ANNs behave similarly to Markov processes \cite{puterman2014markov,lin2016critical}, representing the complete state transition helps show us show that the dynamics of a TM are similar to an NSTM and that the state-transition and higher-order connections are critical for modeling these dynamics. 

Next we know ANNs with infinite precision and time are universal and can approximate any function using only a single layer of neurons using the sigmoid activation function \cite{cybenko1989approximation}. In this context, we now provide a tighter bound showing that \textbf{ANNs, when augmented with memory and higher-order connections, can also approximate any function crucially with finite precision and time}. Desirably, we provide a new bound for ANNs that can exploit the desirable properties of neural transformers, since such models do take into account the entire state-transition history using a combination of positional encodings and attention.  
Let $a_t$ represent an action neuron with recurrence. If we remove the recurrence denoted as ($A^r$) and encode the entire (input) history, then $A^r = ({a^r}^{i_1j_1,k_1l_1;\ldots ;i_pj_p,k_pl_p}_{i_{p+1}j_{p+1},k_{p+1}l_{p+1}})$. Similarly, we can represent the final state neuron of an NSTM without recurrence as ($Z^r$ = $({z^r}^{i_1j_1,k_1l_1;\ldots ;i_qj_q,k_ql_q}_{i_{q+1}j_{q+1},k_{q+1}l_{q+1}})$ ), where $p$ and $q$ represent the order of the tensor. Here, we will show how an NSTM without recurrence also obeys the associative law and can model any class of TM: 
\begin{table}
    \centering
    \begin{tabular}{l|r||r||r|r}
    \multicolumn{1}{l}{}&\multicolumn{1}{|c}{\begin{tabular}[x]{@{}c@{}}\textbf{$Train$}\\\end{tabular}}&\multicolumn{1}{|c}{\begin{tabular}[x]{@{}c@{}}\textbf{$Test$}\\\end{tabular}}&\multicolumn{2}{|c}{\begin{tabular}[x]{@{}c@{}}\textbf{$Long Strings$}\\\end{tabular}}\tabularnewline
    \multicolumn{1}{l}{\textbf{RNN Models}}&\multicolumn{1}{|c}{\begin{tabular}[x]{@{}c@{}}\textbf{Mean}\\\end{tabular}}&\multicolumn{1}{|c}{\begin{tabular}[x]{@{}c@{}}\textbf{Mean}\\\end{tabular}}&\multicolumn{1}{|c}{\begin{tabular}[x]{@{}c@{}}\textbf{n=500}\\\end{tabular}}&\multicolumn{1}{|c}{\begin{tabular}[x]{@{}c@{}}\textbf{n=1000}\\\end{tabular}}\tabularnewline
\hline
        \textit{LSTM} & $100$ & $98$ & $62$ & $35.5$  \tabularnewline
        \textit{Stack-RNN} & $100$ & $98.5$ & $80$ & $49.00$  \tabularnewline
        \textit{NTM} & $100$ & $99.99$ & $82.5$ & $65.00$   \tabularnewline
        \textit{Transformer} & $100$ & $92.5$ & $36.5$ & $4.00$
        \tabularnewline
        \textit{NSTM (\textbf{ours})} & $100$ & $99.99$ & $99.50$ & $99.4$  \tabularnewline
    \end{tabular}%
    \caption{Percentage of correctly classified strings of RNNs trained on the $D_4$ language (over $3$ trials). We report the mean accuracy for each model. The test set contained a mix of short \& long samples of length up to $T = 120$ with both positive and negative samples. We also report the mean accuracy of models tested on longer strings ($n=500$ \& $n=1000$)}
    \label{D4performance}
\end{table}
\begin{definition}
    Let action neurons, without recurrence, $A^r$ and a threshold activation function be a tensor of order $4p+1$ and let the final state neuron without recurrence ($Z^r$) be a tensor of order $4q+1$ such that $p,q \in \mathbb{N}^+$. Therefore, the interaction between the current action neuron and state neuron f($A^r \times_2 Z^r$) is represented as a Type 2 tensor and can be defined using a tensor without recurrence of order $4(2pq+1)$ such that $Z^f= ({z^f}^{i_1j_1,k_1l_1;\ldots ;i_{2pq}j_{2pq},k_{2pq}l_{2pq}}_{i_{2pq+1}j_{2pq+1},k_{2pq+1}l_{2pq+1}})$ and indices $I^{''} =(i_1''j_1'',k_1''l_1'';\ldots;i_q''j_q'',k_q''l_q'') $. Thus, the complete composition of this function can be shown to be:
\begin{align*}
Z^{f} =  &F(\sum_{I^{''}}  s^{r{i_1j_1,k_1l_1; \ldots ;i_pj_p,k_pl_p}}_{{i_{1}'j_{1}'},k_{1}''l_{1}''} 
&\cdot a^{r{i_{p+1}j_{p+1},k_{p+1}l_{p+1}; \ldots ;i_{2p}j_{2p},k_{2p}l_{2p}}}_{i_{1}''j_{1}'',{k_{1}'l_{1}'}} 
&\cdot a^{r{i_{2p+1}j_{2p+1},k_{2p+1}l_{2p+1}; 
\ldots  ;i_{3p}j_{3p},k_{3p}l_{3p}}}_{{i_{2}'j_{2}'},k_{2}''l_{2}''}
\nonumber \\ &\cdot a^{r{i_{3p+1}j_{3p+1},k_{3p+1}l_{3p+1}; 
\ldots ;i_{4p}j_{4p},k_{4p}l_{4p}}}_{i_{2}''j_{2}'',{k_{2}'l_{2}'}} &\cdot {z^f}^{{i_1'j_1',k_1'l_1';\ldots;i_q'j_q',k_q'l_q'}}_{{i_{2pq+1}j_{2pq+1},k_{2pq+1}l_{2pq+1}}}). 
\end{align*}
where $F$ is a threshold logic gate and, even when using the logistic sigmoid, for sufficiently larger values of $H$, we can find that the neural activities converge to $0$ or $1$. Thus, using the sigmoid ($\sigma$) or ($h_H$) function, for sufficiently larger values of $H$, without loss of generality and minimum error $\epsilon$, we can represent the complete, differentiable transition as follows:
\begin{align*}
Z^f = &h_H (\sum_{I^{''}}  s^{r{i_1j_1,k_1l_1;\ldots ;i_pj_p,k_pl_p}}_{{i_{1}'j_{1}'},{k_{1}''l_{1}''}} &\cdot a^{ r{\mathbf{t1}} } 
&\cdot a^{ r{\mathbf{t2}} }
&\cdot a^{ r{\mathbf{t3}} }
\quad   
&\cdots \cdots a^{r{\mathbf{tn}}} \nonumber \\  
&\cdot z^{f{i_1'j_1',k_1'l_1';\ldots;i_q'j_q',k_q'l_q'}}_{ {i_{2pq+1}j_{2pq+1}},{k_{2pq+1}l_{2pq+1}} }) 
+ \epsilon  
\end{align*}
where we further define:
\begin{align*}
\omega_1 &= {i_{p+1}j_{p+1},k_{p+1}l_{p+1};\ldots ;i_{2p}j_{2p},k_{2p}l_{2p}} \\
\omega_2 &= {i_{2p+1}j_{2p+1},k_{2p+1}l_{2p+1}; \ldots ;i_{3p}j_{3p},k_{3p}l_{3p}} \\
\omega_3 &= {i_{3p+1}j_{3p+1},k_{3p+1}l_{3p+1}; 
\ldots ;i_{4p}j_{4p},k_{4p}l_{4p}} \\
\delta_{e} &= (2q-1)p+1 \\
\delta_{e+1} &= 2pq \\
\omega_m &= {i_{\delta_{e}}j_{\delta_{e}},k_{\delta_{e}}l_{\delta_{e} } \ldots;i_{\delta_{e+1}}j_{\delta_{e+1}},k_{\delta_{e+1}}l_{\delta_{e+1}}} \\
a^{r{\mathbf{t1}}} &= a^{r\omega_1}_{i_{1}''j_{1}'',{k_{1}'l_{1}'}} \\ 
a^{r{\mathbf{t2}}} &= a^{r\omega_2}_{{i_{2}'j_{2}'},k_{2}''l_{2}''} \\ 
a^{r{\mathbf{t3}}} &= a^{r\omega_3}_{i_{2}''j_{2}'',{k_{2}'l_{2}'}} \\
a^{r{\mathbf{tn}}} &= a^{r \omega_m}_{i_q''j_q'',k_q'l_q'}  
\end{align*}
noting that $\mathbf{t}$ captures the evolution of $p$ and $\mathbf{n}$ captures the evolution of $q$.
\end{definition}

\begin{definition}
    Assume that the state neurons of an NSTM without any recurrence are represented as $S^r = \sigma(Ws^{r}_{i,j,k,l}z_jr_k,x_l)$, representing the configuration of TM $M$. Let $A^r$ be the transition function of order $4(p+1)$, then the Type 1 product ($S^r \times_1 A^r$ to get the final state neurons ($Z^r$) of order $4$ is represented as:
    \begin{align*}    
         S^r = \sigma \Bigg(\bigg(\sum_{k',l'} S^r_i\cdots S^r_p Wa^{r^{i_1j_1,k_1l_1;\ldots ;i_pj_p,k_pl_p}}_ {ij,k'l'}\bigg) \bigg(\sum_{i',j'} S^r_i\cdots S^r_p Wa^{r^{i_1j_1,k_1l_1;\ldots ;i_pj_p,k_pl_p}}_ {i'j',kl}\bigg) \Bigg)   
    \end{align*}
  where $\sigma$ is the activation function and $Wa^r$ is the tensor weights for action neuron $A^r$. With this definition, it is possible to show equivalence between an NSTM with a TM without recurrence but with an increased complexity by introducing an n-order tensor.
\end{definition}

Based on the following definitions, we can easily verify the following proposition. Recall that the small universal Turing machine (UTM) based on \cite{neary2007four} consists of $6$ states with $4$ symbols. Therefore, we show that a $6$ neuron higher-order NSTM, without recurrence, can model a TM with bounded precision and weights:
\begin{proposition}
    Let the states of the NSTM without recurrence be represented by $S^r$. An action neuron, which also models the transition of a TM, is represented as $A^r$ and the final state neuron is represented as $Z^r$. Then, based on the Type 1 and Type 2 definitions presented before, we see that: 
    $\sigma((S^r \times_1 A^r) \times_1 Z^r )$ = $\sigma(S^r \times_1 (A^r \times_2 Z^r))$. Thus, a Type 2 product also obeys the associative law, proving that the NSTM without recurrence reconstruction is close to an ideal TM $M$ and thus follows the evolution of a Markov process.
\end{proposition}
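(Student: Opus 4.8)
The plan is to unfold both sides of the claimed identity into explicit Einstein-notation summations, show they coincide index-for-index after relabelling dummy indices, and then dispatch the activation function separately for the threshold and sigmoidal cases. First I would expand the left-hand side $\sigma\bigl((S^r \times_1 A^r) \times_1 Z^r\bigr)$: the inner Type~1 product $S^r \times_1 A^r$, read off from the non-recurrent form of Definition~\ref{def:Type1prod}, produces an order-$4$ tensor that factors into a ``local'' contraction over the $(k,l)$-slots and a ``global'' contraction over the $(i,j)$-slots; composing this intermediate tensor once more with $Z^r$ via $\times_1$ contracts the remaining free slots against the evolution tensor, yielding a single sum over the primed and double-primed dummy indices $I''$ of a product of one $S^r$ block, the action blocks $a^{r\omega_1},a^{r\omega_2},a^{r\omega_3},\ldots,a^{r\omega_m}$, and the $z^f$ block --- precisely the composition displayed in the Type~2 definition.

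Next I would expand the right-hand side $\sigma\bigl(S^r \times_1 (A^r \times_2 Z^r)\bigr)$. Here one first forms $A^r \times_2 Z^r$, which by its definition is already the order-$4(2pq+1)$ tensor $Z^f$ built by contracting the action blocks against $z^f$ over the same index family $I''$; post-composing with $S^r$ via $\times_1$ attaches the $S^r$ block and again peels off the local and global contractions. Because Einstein summation is bilinear and the order in which nested contractions are performed is immaterial --- the cumulative and distributive laws for these sums cited earlier (\cite{einstein1916foundation}) --- the two multi-index sums are the same up to renaming of bound indices. The only bookkeeping requiring care is matching which $(i,j)$- versus $(k,l)$-slots are summed inside the ``local'' factor versus the ``global'' factor, which I would track through the blocks $\omega_1,\omega_2,\omega_3,\ldots,\omega_m$ under both parenthesizations.

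For the activation function: when $\sigma = F$ is the threshold logic gate, every partial product stays binary and sparse (at most one non-zero term survives each contraction, exactly as in the proof of Theorem~\ref{thm:product_and_DTM}), so the identity holds exactly. When $\sigma = h_H$ is the scaled sigmoid, I would instead argue that both sides lie within $\epsilon$ of the same ideal binary tensor by Lemma~\ref{stack_op_lem_2}, hence within $2\epsilon$ of each other; since $H$ may be taken arbitrarily large and $\epsilon$ arbitrarily small, the two sides agree in the limit, which is the sense in which $\mathrm{UTM}=\mathrm{NSTM}+\epsilon$. Chaining this associativity with Theorem~\ref{thm:product_and_DTM} applied to the recurrence-free construction then shows the NSTM without recurrence reproduces the same configuration-to-configuration map as $M$; since that map depends only on the present configuration, the induced dynamics are those of a Markov process.

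The step I expect to be the real obstacle is the index combinatorics of the first two paragraphs --- verifying that the local and global halves of the Type~1 product line up correctly with the $\omega$-blocks of the Type~2 tensor under both groupings, so that the equality of summations is an honest relabelling and not merely a plausible one. Everything after that, including the sigmoid case via Lemma~\ref{stack_op_lem_2} and the Markov-process conclusion, is routine.
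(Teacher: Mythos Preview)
The paper does not actually supply a proof of this proposition: the sentence immediately preceding the statement reads ``Based on the following definitions, we can easily verify the following proposition,'' and the text then moves straight to the corollary about the $12$-neuron bounded NSTM. In other words, the paper treats the associativity claim as a direct consequence of the Type~1 and Type~2 definitions together with the already-cited cumulative and distributive laws for Einstein summation, and leaves the verification to the reader.

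Your proposal is precisely that verification fleshed out: expand both parenthesizations into explicit multi-index sums, observe that nested tensor contractions commute by bilinearity, and handle the threshold versus sigmoid activation separately via Lemma~\ref{stack_op_lem_2}. This is the natural (and only obvious) way to carry out the ``easy verification'' the paper alludes to, so your approach is aligned with the paper's intent and in fact more explicit than anything the paper provides. Your self-identified obstacle---tracking the $(i,j)$- versus $(k,l)$-slot bookkeeping through the $\omega$-blocks---is real, and the paper's own definitions are loose enough that this bookkeeping is where any genuine difficulty would lie; but the paper evidently regards it as routine and offers no further guidance.
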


\begin{corollary}
There exists a $12$ neuron bounded NSTM without recurrence and $25^{th}$ order tensor weights that can simulate any Turing machine in O($T^6$), where $T$ is the total number of turns or steps required by the TM to compute a final output. 
\end{corollary}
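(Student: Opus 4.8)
The plan is to bolt together three ingredients the excerpt has already set up: the non-recurrent NSTM of Section~\ref{sec:turing_completeness}, in which recurrence is traded for higher-order ``to-action'' and ``final-state'' tensors $A^r$ and $Z^r$ that absorb the transition history; the associativity of the Type~1 and Type~2 tensor products defined above (culminating in the Proposition $\sigma((S^r\times_1 A^r)\times_1 Z^r) = \sigma(S^r\times_1(A^r\times_2 Z^r))$), which certifies that one step of TM evolution can be realized by a single feed-forward contraction; and the small universal machine $UTM_{6,4}$ of \cite{neary2007four} together with the bounded-precision device (a halting state plus a second characteristic tensor) used in the $13$-neuron recurrent corollary.

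First I would instantiate the non-recurrent construction on $M = UTM_{6,4}$: its six control states are carried by six state neurons $S^r = \sigma(Ws^{r}_{i,j,k,l}\,z_j r_k x_l)$, with the cell index, written symbol, control state and head position occupying the four slots $(i,j,k,l)$ exactly as in Theorem~\ref{thm:product_and_DTM}. Because there is no feedback loop, the dependence of the next configuration on the current one must now be baked into the weights: the transition relation $\delta=(\delta_1,\delta_2,\delta_3)$ of $UTM_{6,4}$ has domain $\Gamma\times Q$ of size $4\times 6 = 24$, so encoding the whole input-to-output behaviour of $\delta$ in a single contraction forces the action tensor $A^r$ to have order $4p+1$ with $p$ chosen so that $4p+1 = 25$. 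I would make this precise by writing $A^r$ and $Z^r$ in the Einstein-index form of the non-recurrent Definitions and counting that the Type~2 contraction $A^r\times_2 Z^r$ consumes exactly $25$ index groups once the groups shared between $S^r$, $A^r$ and $Z^r$ are merged via the cumulative/distributive laws for Einstein summation cited in the excerpt.

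Next I would recover bounded precision. In the recurrent case the obstruction was that a ``globally undetermined'' or illegal configuration can push a neuron out of $\{0,1\}$, with the remedy being the halting state $q_0$ plus a second tensor $a_t$ carrying the evolution, so that illegal configurations collapse into $q_0$ and vanish on the following step. Without recurrence there is no following step to worry about --- an undetermined configuration cannot be fed back --- so the dedicated halting neuron becomes unnecessary; instead the six state neurons are duplicated into six action neurons $A^r$ that hold the evolution, $Z^r$ is read off from their Type~2 product rather than from a fresh bank, and all activations remain in $\mathbb{Q}\cap[0,1]$. This yields $6+6 = 12$ neurons. Lemma~\ref{stack_op_lem_2} then lifts the threshold-gate version to a logistic-sigmoid version with error $\epsilon$ for $H$ large, so the ``bounded'' claim survives the smooth activation. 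Finally, since one application of the feed-forward map simulates one step of $UTM_{6,4}$ and $UTM_{6,4}$ simulates an arbitrary TM with multiplicative overhead $O(T^6)$ \cite{neary2007four}, the composite machine simulates any TM in $O(T^6)$ steps, which is the assertion.

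The step I expect to be the real obstacle is the index bookkeeping that pins the tensor order at exactly $25$ rather than merely $O(1)$: one has to take the order formulas $4p+1$ and $4(2pq+1)$ of the non-recurrent Definitions, specialize $p$ and $q$ to $UTM_{6,4}$, and verify that after deleting the redundant index groups common to $S^r$, $A^r$ and $Z^r$ the surviving tensor has order $25$. A secondary point requiring care is that the associativity identity $(S^r\times_1 A^r)\times_1 Z^r = S^r\times_1(A^r\times_2 Z^r)$ must be re-checked for the enlarged $12$-neuron, order-$25$ tensors; this should go through verbatim from the Einstein-summation identities, but it needs to be stated for the larger objects before the corollary can be declared proved.
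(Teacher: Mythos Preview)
The paper does not actually supply a proof of this corollary: it is stated immediately after the Proposition on associativity and followed only by the remark that ``there exists a bounded weight and precision ANN with only feedforward connections that can model any class of TM grammar.'' So there is no line-by-line argument to compare against; the corollary is treated as a direct consequence of the non-recurrent Definitions, the associativity Proposition, and the small-UTM input exactly as in the two recurrent corollaries.

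Your proposal is therefore in the right spirit --- you invoke precisely the ingredients the paper relies on implicitly (the $UTM_{6,4}$ of \cite{neary2007four}, the order-$4p+1$ action tensor, the Type~1/Type~2 associativity, and Lemma~\ref{stack_op_lem_2} for the sigmoid case) --- and in that sense it matches the paper's approach. Two of your justifications, however, go beyond anything the paper actually argues and are shakier than you present them. First, the link you draw between the transition-table size $|\Gamma\times Q|=24$ and the tensor order $25$ is not the mechanism at work: the order $25$ comes from the formula $4p+1$ with $p=6$ tied to the six control states, not from the cardinality of $\delta$'s domain, and your ``$24\Rightarrow 25$'' step does not type-check as written. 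Second, your explanation for why the bounded non-recurrent machine needs $12$ rather than $13$ neurons --- that without feedback an illegal configuration cannot propagate, so the halting neuron $q_0$ is dispensable --- is a plausible rationalization, but the paper never states or proves this; it simply asserts the count. If you intend to present this as a proof rather than a sketch, both of those steps would need independent support, because the paper itself does not supply it.
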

The above corollary shows that there exists a \textbf{bounded weight and precision ANN with only feedforward connections that can model any class of TM grammar}.

\section{Experimental Setup and Results}
\label{sec:experiments}

To test our hypothesis, we experimented with complex grammars, such as the Dyck languages, and tested various models on longer strings. The Dyck languages are defined in the following manner: 
Let $\Sigma = {[,]}$ be the alphabet consisting of symbols '[' and ']' and let $\Sigma^*$ denote its Kleene closure. A Dyck language is then formally defined as:
\begin{align}
&\{n \in \Sigma^*  | \mbox{ all prefixes of } n 
&\mbox{ contain no more of symbol ']' than symbol '['  and }  
&cnt(\mbox{'['}, n) = cnt(\mbox{']'}, n)  \}
\end{align}
where $cnt([\text{symbol}],n)$ is the frequency of [symbol] in $n$. We tested our model on various Dyck languages, such as $D_2$, $D_3$, and $D_4$. 
We created a dataset containing $5000$ training samples (of length $T$ less than or equal to $52$), $500$ for validation ($20 < T \leq 70$), and $3000$ samples for testing ($52 <  T \leq 120)$. We also created two separate test sets with $1000$ samples of each with lengths in the ranges ($120 <  T \leq 500)$ and ($500 <  T \leq 1000)$, respectively. 

\textbf{Baselines:} We compare our NSTM with modern-day ANNs that have and do not have memory, such as the long short-term memory (LSTM) \cite{hochreiter1997long}, stack-RNN \cite{joulin2015inferring}, the neural Turing machine (NTM) \cite{graves2014neural}, and transformers \cite{vaswani2017attention}. All baseline models are trained using backpropagation of errors (backprop) and backprop through time (BPTT). To select the optimal hyperparameters, we perform a grid search on batch sizes in the range [$32-128$], hidden unit sizes [$64-1024$], and use a learning rate [$0.1 - 1e-4$]. All models are trained with stochastic gradient descent and the learning rate is reduced to half using patience scheduling. All models are trained over $400$ epochs with a mean square error cost function (early stopping is used to obtain the optimal model). The best (model) settings are reported in the Appendix. 

In contrast to the baselines, the NSTM is trained using real-time recurrent learning (RTRL) \cite{williams1989rtrl} with only $8$ neurons; we report the performance of the NSTM with different numbers of hidden units as well as the average number of epochs required to converge. 
We also report the average accuracy of all models and provide a detailed analysis in the appendix. 
We also report the NSTM's performance on the $D_2$ and $D_3$ grammars in the appendix. To have a fair comparison with the transformer, we adapt the transformer model proposed in \cite{attn_2} with four attention heads, referred to as $SA^+$, which was shown to perform better than a vanilla transformer when recognizing Dyck languages. Our experiments with the NSTM show that it consistently performs better than the other neural models. Even though it is challenging to train memory-augmented NNs in general, it is seen that, in order to learn complex languages, simply increasing the number of parameters in any ANN is insufficient; one needs memory access to store the intricate patterns. All memory-augmented structures achieve better performance compared to the LSTM and even transformers. Similar findings are also reported in other works \cite{mali2020neural, suzgun2019memoryaugmented, deletang2022neural}; however, most of these prior works focus on first-order ANNs with and without memory. 

In the appendix, we report baseline model results and NSTM parameters that performed well (also reporting ranges used for the parameter sweep/grid-search). We provide a detailed analysis of our results in the appendix.

\textbf{Results:} Note that these experiments are not meant to achieve state-of-the-art performance but rather to showcase the benefits offered by tensor connections both in terms of interpretability and performance. In Table \ref{D4performance}, we observe that the NSTM with only $8$ neurons outperforms all baselines by a wide margin on longer strings. Note that the NSTM is trained from scratch instead of manually programming the weights. It is interesting to see that, in Table \ref{D4performance}, modern-day ANNs, with even more parameters, struggle to recognize the $D_4$ grammar when tested on longer strings. We also conducted an ablation study and reported the average number of epochs required by each model to generalize well on the validation set. We report the results for other settings as well as the results for other grammars in the Appendix. 

\section{Conclusion}
\label{sec:conclusion}

In this work, we introduced a new class of formal models called the Neural State Turing Machine (NSTM), which is interpretable and capable of modeling any Turing machine. We show the algebraic representation of a tensor neural network for recognizing recursively enumerable languages and prove that the Universal Turing completeness of a $6$-bounded state neuron recurrent neural network (RNN), the smallest Turing complete RNN to date, of any order. Prior results \cite{chung2021turing} have shown that for a $6$-state, $4$-symbol Universal Turing Machine (UTM), there exists a $40$-neuron unbounded precision RNN and a $54$-neuron bounded precision RNN augmented with two stack-like memory. Our approach vastly improves upon these prior works by using fewer neurons and introducing higher-order/tensor synaptic weights. Our model is both interpretable and efficient, making it a promising avenue for developing more trustworthy and reliable machine-learning applications. By leveraging tensor synaptic connections, the NSTM is able to model complex rules in real-time and with finite-precision weights, which is a significant advantage over prior approaches that assume unbounded computation time and rational weights. These results have implications for developing more powerful and \textbf{interpretable neural models} and serve as a plausible candidate in the advancing field of responsible AI. Future work will be dedicated to carefully analyzing the capability of stably extracting state machines from these networks and designing stable training algorithms to train such systems. 

\paragraph{Limitations:} This work is primarily theoretical and, due to the higher-order synapses, the proposed NSTM is computationally expensive and challenging to scale. Second, its training depends on complex algorithms, e.g., RTRL. Despite these challenges, NSTM is interpretable, requiring fewer neurons and data, and desirably generalizes to longer sequences. These features are important to lower carbon footprint induced by the training of large LLMs and, thus, more development in this area would lead to the creation of environmentally-friendlier, responsible AI systems. Details  with respect to societal impact can be found in the appendix.

\bibliographystyle{acm}
\bibliography{custom}

\begin{thebibliography}{10}

\bibitem{einstein1916foundation}
{\sc Albert, E., Perrett, W., and Jeffery, G.}
\newblock The foundation of the general theory of relativity.
\newblock {\em Ann. Der Phys 49\/} (1916), 769--822.

\bibitem{chung2021turing}
{\sc Chung, S., and Siegelmann, H.}
\newblock Turing completeness of bounded-precision recurrent neural networks.
\newblock {\em Advances in Neural Information Processing Systems 34\/} (2021).

\bibitem{cybenko1989approximation}
{\sc Cybenko, G.}
\newblock Approximation by superpositions of a sigmoidal function.
\newblock {\em Mathematics of control, signals and systems 2}, 4 (1989),
  303--314.

\bibitem{deletang2022neural}
{\sc Del{\'e}tang, G., Ruoss, A., Grau-Moya, J., Genewein, T., Wenliang, L.~K.,
  Catt, E., Cundy, C., Hutter, M., Legg, S., Veness, J., et~al.}
\newblock Neural networks and the chomsky hierarchy.
\newblock {\em arXiv preprint arXiv:2207.02098\/} (2022).

\bibitem{attn_2}
{\sc Ebrahimi, J., Gelda, D., and Zhang, W.}
\newblock How can self-attention networks recognize dyck-n languages?
\newblock {\em CoRR abs/2010.04303\/} (2020).

\bibitem{einstein1916grundlage}
{\sc Einstein, A.}
\newblock Die grundlage der allgemeinen relativit{\"a}tstheorie [adp 49, 769
  (1916)].
\newblock {\em Annalen der Physik 14}, S1 1 (1916), 517--571.

\bibitem{graves2014neural}
{\sc Graves, A., Wayne, G., and Danihelka, I.}
\newblock Neural turing machines.
\newblock {\em arXiv preprint arXiv:1410.5401\/} (2014).

\bibitem{graves2016hybrid}
{\sc Graves, A., Wayne, G., Reynolds, M., Harley, T., Danihelka, I.,
  Grabska-Barwi{\'n}ska, A., Colmenarejo, S.~G., Grefenstette, E., Ramalho, T.,
  Agapiou, J., et~al.}
\newblock Hybrid computing using a neural network with dynamic external memory.
\newblock {\em Nature 538}, 7626 (2016), 471.

\bibitem{grefenstette2015learning}
{\sc Grefenstette, E., Hermann, K.~M., Suleyman, M., and Blunsom, P.}
\newblock Learning to transduce with unbounded memory.
\newblock In {\em Advances in neural information processing systems\/} (2015),
  pp.~1828--1836.

\bibitem{hochreiter1997long}
{\sc Hochreiter, S., and Schmidhuber, J.}
\newblock Long short-term memory.
\newblock {\em Neural computation 9}, 8 (1997), 1735--1780.

\bibitem{hopcroft2001introduction}
{\sc Hopcroft, J.~E., Motwani, R., and Ullman, J.~D.}
\newblock Introduction to automata theory, languages, and computation.
\newblock {\em Acm Sigact News 32}, 1 (2001), 60--65.

\bibitem{joulin2015inferring}
{\sc Joulin, A., and Mikolov, T.}
\newblock Inferring algorithmic patterns with stack-augmented recurrent nets.
\newblock In {\em Advances in neural information processing systems\/} (2015),
  pp.~190--198.

\bibitem{kag2021training}
{\sc Kag, A., and Saligrama, V.}
\newblock Training recurrent neural networks via forward propagation through
  time.
\newblock In {\em International Conference on Machine Learning\/} (2021), PMLR,
  pp.~5189--5200.

\bibitem{lin2016critical}
{\sc Lin, H.~W., and Tegmark, M.}
\newblock Critical behavior from deep dynamics: a hidden dimension in natural
  language.
\newblock {\em arXiv preprint arXiv:1606.06737\/} (2016).

\bibitem{mali2021recognizing}
{\sc Mali, A., Ororbia, A., Kifer, D., and Giles, L.}
\newblock Recognizing long grammatical sequences using recurrent networks
  augmented with an external differentiable stack.
\newblock In {\em International Conference on Grammatical Inference\/} (2021),
  PMLR, pp.~130--153.

\bibitem{mali2021recognizing2}
{\sc Mali, A., Ororbia, A.~G., Kifer, D., and Giles, C.~L.}
\newblock Recognizing and verifying mathematical equations using multiplicative
  differential neural units.
\newblock In {\em Proceedings of the AAAI Conference on Artificial
  Intelligence\/} (2021), vol.~35, pp.~5006--5015.

\bibitem{mali2020neural}
{\sc Mali, A.~A., Ororbia~II, A.~G., and Giles, C.~L.}
\newblock A neural state pushdown automata.
\newblock {\em IEEE Transactions on Artificial Intelligence 1}, 3 (2020),
  193--205.

\bibitem{menick2020practical}
{\sc Menick, J., Elsen, E., Evci, U., Osindero, S., Simonyan, K., and Graves,
  A.}
\newblock A practical sparse approximation for real time recurrent learning.
\newblock {\em arXiv preprint arXiv:2006.07232\/} (2020).

\bibitem{neary2007four}
{\sc Neary, T., and Woods, D.}
\newblock Four small universal turing machines.
\newblock In {\em International Conference on Machines, Computations, and
  Universality\/} (2007), Springer, pp.~242--254.

\bibitem{perez2019turing}
{\sc P{\'e}rez, J., Marinkovi{\'c}, J., and Barcel{\'o}, P.}
\newblock On the turing completeness of modern neural network architectures.
\newblock {\em arXiv preprint arXiv:1901.03429\/} (2019).

\bibitem{puterman2014markov}
{\sc Puterman, M.~L.}
\newblock {\em Markov decision processes: discrete stochastic dynamic
  programming}.
\newblock John Wiley \& Sons, 2014.

\bibitem{attn_turing}
{\sc PÃ©rez, J., BarcelÃ³, P., and Marinkovic, J.}
\newblock Attention is turing-complete.
\newblock {\em Journal of Machine Learning Research 22}, 75 (2021), 1--35.

\bibitem{siegelmann94}
{\sc Siegelmann, H.~T., and Sontag, E.~D.}
\newblock Analog computation via neural networks.
\newblock {\em Theor. Comput. Sci. 131}, 2 (1994), 331--360.

\bibitem{siegelmann95}
{\sc Siegelmann, H.~T., and Sontag, E.~D.}
\newblock On the computational power of neural nets.
\newblock {\em J. Comput. Syst. Sci. 50}, 1 (1995), 132--150.

\bibitem{stogin2020provably}
{\sc Stogin, j., Mali, A., and Giles, L.}
\newblock Provably stable interpretable encodings of context free grammars in
  rnns with a differentiable stack.
\newblock {\em arXiv preprint arXiv:2006.03651\/} (2020).

\bibitem{nndpa1998sun}
{\sc Sun, G.~Z., Giles, C.~L., and Chen, H.~H.}
\newblock {\em The neural network pushdown automaton: Architecture, dynamics
  and training}.
\newblock Springer Berlin Heidelberg, Berlin, Heidelberg, 1998, pp.~296--345.

\bibitem{suzgun2019memoryaugmented}
{\sc Suzgun, M., Gehrmann, S., Belinkov, Y., and Shieber, S.~M.}
\newblock Memory-augmented recurrent neural networks can learn generalized dyck
  languages, 2019.

\bibitem{tallec2017uoro}
{\sc Tallec, C., and Ollivier, Y.}
\newblock Unbiased online recurrent optimization.
\newblock {\em CoRR abs/1702.05043\/} (2017).

\bibitem{vaswani2017attention}
{\sc Vaswani, A., Shazeer, N., Parmar, N., Uszkoreit, J., Jones, L., Gomez,
  A.~N., Kaiser, {\L}., and Polosukhin, I.}
\newblock Attention is all you need.
\newblock {\em Advances in neural information processing systems 30\/} (2017).

\bibitem{williams1989rtrl}
{\sc {Williams}, R.~J., and {Zipser}, D.}
\newblock A learning algorithm for continually running fully recurrent neural
  networks.
\newblock {\em Neural Computation 1}, 2 (June 1989), 270--280.

\bibitem{williams1995gradient}
{\sc Williams, R.~J., and Zipser, D.}
\newblock Gradient-based learning algorithms for recurrent networks and their
  computational complexity.
\newblock {\em Backpropagation: Theory, architectures, and applications 433\/}
  (1995).

\bibitem{yin2018understanding}
{\sc Yin, P., Lyu, J., Zhang, S., Osher, S.~J., Qi, Y., and Xin, J.}
\newblock Understanding straight-through estimator in training activation
  quantized neural nets.
\newblock In {\em International Conference on Learning Representations\/}
  (2019).

\bibitem{zeng1994discrete}
{\sc Zeng, Z., Goodman, R.~M., and Smyth, P.}
\newblock Discrete recurrent neural networks for grammatical inference.
\newblock {\em IEEE Transactions on Neural Networks 5}, 2 (1994), 320--330.

\end{thebibliography}
\newpage
\appendix
\section{Appendix A: Experimental Design and Ablation Study}
As noted in the main paper, the NSTM is trained using real-time recurrent learning (RTRL) \cite{williams1989rtrl} with only $8$ neurons; we report the performance of the NSTM with different numbers of hidden units as well as the average number of epochs required to converge. 
We also report the average accuracy of all models here and provide a detailed analysis. In Table \ref{best_settings}, we report the best settings for each model.  
We also report the NSTM's performance on the $D_2$ and $D_3$ grammars in Table \ref{D2performance} and \ref{D3performance}, respectively. To have a fair comparison with the transformer, we adapt the transformer model proposed in \cite{attn_2} with four attention heads, referred to as $SA^+$, which was shown to perform better than a vanilla transformer model in recognizing Dyck languages. Our experiments with the NSTM show that it constantly performs better than the other artificial neural networks (ANNs) with and without memory. Even though it is challenging to train memory-augmented NNs in general, it is seen that to learn complex languages, simply increasing the number of parameters in any ANN is insufficient; one needs memory access to store the intricate patterns. All memory-augmented structures achieve better performance compared to the LSTM and even transformers. Similar findings are also reported in other works \cite{mali2020neural, suzgun2019memoryaugmented, deletang2022neural}; however, most of these prior works focus on first-order ANNs with and without memory. 

We also conduct an ablation study to show how the NSTM works with different hidden layer sizes \ref{D4performance_neurons_nstm}. As noted earlier, it is challenging to train the NSTM with a large number of neurons using RTRL. However, despite having less number of neurons, it is evident from the theoretical and empirical findings that NSTM can generalize better compared to other models. To align theory with practical findings, we only conducted experiments with a small number of neurons.  
Table \ref{average_epochs} reports each model's average number of epochs required to achieve its best validation accuracy. We define this as a convergence criterion; it is worth noting that the transformer-based architecture never reaches $100$\% on the validation set since the positional embedding was never encoded with information related to longer strings. However, the transformer does reach $100$\% on the training set, but this can only be achieved by increasing the number of parameters.
All experiments in this work are performed using Tensorflow 2.2 framework on an Intel Xeon server with a maximum clock speed of $3.5$GHz with four $2080$Ti GPUs with $192$GB RAM. 

\begin{table}
\caption{Percentage of correctly classified strings of RNNs trained on the $D_4$ language (over $3$ trials). We report the change in model performance as the number of hidden units ($H$) are changed. We report each model's mean accuracy on long strings ($n=1000$). ``N/A'' indicates that, for the NSTM, we did not analyze other settings of $H$.}
    \centering
    \begin{tabular}{l|r||r||r|r}
    \multicolumn{1}{l}{}&\multicolumn{1}{|c}{\begin{tabular}[x]{@{}c@{}}\textbf{}\\\end{tabular}}&\multicolumn{1}{|c}{\begin{tabular}[x]{@{}c@{}}\textbf{}\\\end{tabular}}&\multicolumn{2}{|c}{\begin{tabular}[x]{@{}c@{}}\textbf{}\\\end{tabular}}\tabularnewline
    \multicolumn{1}{l}{\textbf{RNN Models}}&\multicolumn{1}{|c}{\begin{tabular}[x]{@{}c@{}}\textbf{H=8}\\\end{tabular}}&\multicolumn{1}{|c}{\begin{tabular}[x]{@{}c@{}}\textbf{H=128}\\\end{tabular}}&\multicolumn{1}{|c}{\begin{tabular}[x]{@{}c@{}}\textbf{H=512}\\\end{tabular}}&\multicolumn{1}{|c}{\begin{tabular}[x]{@{}c@{}}\textbf{H=1024}\\\end{tabular}}\tabularnewline
\hline
        \textit{LSTM} & $10.0$ & $25.00$ & $30.5$ & $35.5$  \tabularnewline
        \textit{Stack-RNN} & $25.0$ & $36.00$ & $49.0$ & $38.50$  \tabularnewline
        \textit{NTM} & $26.50$ & $45.00$ & $65.0$ & $60.00$   \tabularnewline
        \textit{Transformer} & $2.00$ & $3.0$ & $3.5$ & $4.00$
        \tabularnewline
        \textit{NSTM (\textbf{ours})} & $99.4$ & $N/A$ & $N/A$ & $N/A$  \tabularnewline
    \end{tabular}%
    
    \label{D4performance_neurons}
\end{table}
\begin{table}[htb!]
\caption{Percentage of correctly classified strings of RNNs trained on the $D_2$ language (across $3$ trials). We report the mean accuracy for each model. The test set contained a mix of short and long samples of length up to $T = 120$ with both positive and negative samples. We also report the mean accuracy of all of the models tested on longer strings ($n=500$ and $n=1000$)}
\label{D2performance}
    \centering
    \begin{tabular}{l|r||r||r|r}
    \multicolumn{1}{l}{}&\multicolumn{1}{|c}{\begin{tabular}[x]{@{}c@{}}\textbf{$Train$}\\\end{tabular}}&\multicolumn{1}{|c}{\begin{tabular}[x]{@{}c@{}}\textbf{$Test$}\\\end{tabular}}&\multicolumn{2}{|c}{\begin{tabular}[x]{@{}c@{}}\textbf{$Long Strings$}\\\end{tabular}}\tabularnewline
    \multicolumn{1}{l}{\textbf{RNN Models}}&\multicolumn{1}{|c}{\begin{tabular}[x]{@{}c@{}}\textbf{Mean}\\\end{tabular}}&\multicolumn{1}{|c}{\begin{tabular}[x]{@{}c@{}}\textbf{Mean}\\\end{tabular}}&\multicolumn{1}{|c}{\begin{tabular}[x]{@{}c@{}}\textbf{n=500}\\\end{tabular}}&\multicolumn{1}{|c}{\begin{tabular}[x]{@{}c@{}}\textbf{n=1000}\\\end{tabular}}\tabularnewline
\hline
        \textit{LSTM} & $100$ & $76.50$ & $50$ & $30.00$  \tabularnewline
        \textit{Stack-RNN} & $100$ & $99$ & $84.50$ & $57.00$  \tabularnewline
        \textit{NTM} & $100$ & $99.99$ & $89.5$ & $73.99$   \tabularnewline
        \textit{Transformer} & $100$ & $60.50$ & $32.50$ & $7.50$
        \tabularnewline
        \textit{NSTM (\textbf{ours})} & $100$ & $99.99$ & $98.50$ & $98.00$  \tabularnewline
    \end{tabular}%
\end{table}

\begin{table}
\caption{Percentage of correctly classified strings of RNNs trained on the $D_3$ language (across $3$ trials). We report the mean accuracy for each model. The test set contained a mix of short and long samples of length up to $T = 120$ with both positive and negative samples. We also report the mean accuracy of all of the models tested on longer strings ($n=500$ and $n=1000$)}
\label{D3performance}
    \centering
    \begin{tabular}{l|r||r||r|r}
    \multicolumn{1}{l}{}&\multicolumn{1}{|c}{\begin{tabular}[x]{@{}c@{}}\textbf{$Train$}\\\end{tabular}}&\multicolumn{1}{|c}{\begin{tabular}[x]{@{}c@{}}\textbf{$Test$}\\\end{tabular}}&\multicolumn{2}{|c}{\begin{tabular}[x]{@{}c@{}}\textbf{$Long Strings$}\\\end{tabular}}\tabularnewline
    \multicolumn{1}{l}{\textbf{RNN Models}}&\multicolumn{1}{|c}{\begin{tabular}[x]{@{}c@{}}\textbf{Mean}\\\end{tabular}}&\multicolumn{1}{|c}{\begin{tabular}[x]{@{}c@{}}\textbf{Mean}\\\end{tabular}}&\multicolumn{1}{|c}{\begin{tabular}[x]{@{}c@{}}\textbf{n=500}\\\end{tabular}}&\multicolumn{1}{|c}{\begin{tabular}[x]{@{}c@{}}\textbf{n=1000}\\\end{tabular}}\tabularnewline
\hline
        \textit{LSTM} & $100$ & $75.00$ & $60$ & $40.5$  \tabularnewline
        \textit{Stack-RNN} & $100$ & $84.50$ & $59.00$ & $43.99$  \tabularnewline
        \textit{NTM} & $100$ & $94.50$ & $80.5$ & $70.00$   \tabularnewline
        \textit{Transformer} & $99.99$ & $66.00$ & $29.5$ & $5.50$
        \tabularnewline
        \textit{NSTM (\textbf{ours})} & $100$ & $99.99$ & $98.50$ & $95.00$  \tabularnewline
    \end{tabular}%
\end{table}

\begin{table*}
\caption{Best settings for each model derived by checking performance on validation set}
\label{best_settings}
    \centering
    \begin{tabular}{l|r||r||r|r}
    \multicolumn{1}{l}{}&\multicolumn{1}{|c}{\begin{tabular}[x]{@{}c@{}}\textbf{}\\\end{tabular}}&\multicolumn{1}{|c}{\begin{tabular}[x]{@{}c@{}}\textbf{}\\\end{tabular}}&\multicolumn{2}{|c}{\begin{tabular}[x]{@{}c@{}}\textbf{}\\\end{tabular}}\tabularnewline
    \multicolumn{1}{l}{\textbf{RNN Models}}&\multicolumn{1}{|c}{\begin{tabular}[x]{@{}c@{}}\textbf{Hidden size}\\\end{tabular}}&\multicolumn{1}{|c}{\begin{tabular}[x]{@{}c@{}}\textbf{layers}\\\end{tabular}}&\multicolumn{1}{|c}{\begin{tabular}[x]{@{}c@{}}\textbf{optimizer}\\\end{tabular}}&\multicolumn{1}{|c}{\begin{tabular}[x]{@{}c@{}}\textbf{learning rate}\\\end{tabular}}\tabularnewline
\hline
        \textit{LSTM} & $1024$ & $2$ & $Adam$ & $1e-4$  \tabularnewline
        \textit{Stack-RNN} & $512$ & $1$ & $SGD$ & $1e-3$  \tabularnewline
        \textit{NTM} & $256$ & $1$ & $SGD$ & $2e-3$   \tabularnewline
        \textit{Transformer} & $1024$ & $3$ & $Adam$ & $2e-5$
        \tabularnewline
        \textit{NSTM (\textbf{ours})} & $8$ & $1$ & $SGD$ & $1e-2$  \tabularnewline
    \end{tabular}
\end{table*}

\begin{table*}
\caption{Percentage of correctly classified strings of RNNs trained on the $D_4$ language (across $3$ trials). We report the mean accuracy for each model on both the training and validation sets. We also report the average number of epochs required by each model to get the best score (as reported in previous tables) on the training and validation sets.}
\label{average_epochs}
    \centering
    \begin{tabular}{l|r||r||r|r}
    \multicolumn{1}{l}{}&\multicolumn{1}{|c}{\begin{tabular}[x]{@{}c@{}}\textbf{$Train$}\\\end{tabular}}&\multicolumn{1}{|c}{\begin{tabular}[x]{@{}c@{}}\textbf{$Validation$}\\\end{tabular}}&\multicolumn{2}{|c}{\begin{tabular}[x]{@{}c@{}}\textbf{$Number of Epochs$}\\\end{tabular}}\tabularnewline
    \multicolumn{1}{l}{\textbf{RNN Models}}&\multicolumn{1}{|c}{\begin{tabular}[x]{@{}c@{}}\textbf{Mean}\\\end{tabular}}&\multicolumn{1}{|c}{\begin{tabular}[x]{@{}c@{}}\textbf{Mean}\\\end{tabular}}&\multicolumn{1}{|c}{\begin{tabular}[x]{@{}c@{}}\textbf{Train}\\\end{tabular}}&\multicolumn{1}{|c}{\begin{tabular}[x]{@{}c@{}}\textbf{Validation}\\\end{tabular}}\tabularnewline
\hline
        \textit{LSTM} & $119$ & $151$ & $90$ & $140$  \tabularnewline
        \textit{Stack-RNN} & $100$ & $99.99$ & $100$ & $130$  \tabularnewline
        \textit{NTM} & $100$ & $99.99$ & $89$ & $115$   \tabularnewline
        \textit{Transformer} & $100$ & $98.00$ & $135$ & $195$
        \tabularnewline
        \textit{NSTM (\textbf{ours})} & $100$ & $99.99$ & $100$ & $108$  \tabularnewline
    \end{tabular}
\end{table*}

\begin{table}
\caption{Percentage of correctly classified strings of the NSTM trained on the $D_4$ language (across $3$ trials). We report the change in model performance as a function of the number of hidden units ($H$). We report the mean accuracy on long strings ($n=1000$).}
\label{D4performance_neurons_nstm}
    \centering
    \begin{tabular}{l|r||r||r|r}
    \multicolumn{1}{l}{}&\multicolumn{1}{|c}{\begin{tabular}[x]{@{}c@{}}\textbf{}\\\end{tabular}}&\multicolumn{1}{|c}{\begin{tabular}[x]{@{}c@{}}\textbf{}\\\end{tabular}}&\multicolumn{2}{|c}{\begin{tabular}[x]{@{}c@{}}\textbf{}\\\end{tabular}}\tabularnewline
    \multicolumn{1}{l}{\textbf{RNN Model}}&\multicolumn{1}{|c}{\begin{tabular}[x]{@{}c@{}}\textbf{H=4}\\\end{tabular}}&\multicolumn{1}{|c}{\begin{tabular}[x]{@{}c@{}}\textbf{H=6}\\\end{tabular}}&\multicolumn{1}{|c}{\begin{tabular}[x]{@{}c@{}}\textbf{H=8}\\\end{tabular}}&\multicolumn{1}{|c}{\begin{tabular}[x]{@{}c@{}}\textbf{H=12}\\\end{tabular}}\tabularnewline
\hline
        \textit{NSTM (\textbf{ours})} & $95.00$ & $97.00$ & $99.40$ & $98.99$  \tabularnewline
    \end{tabular}
\end{table}

\section{Appendix B: Detailed Discussion and Conclusion}
\label{sec:detailed_conclusion}

In this work, we have introduced a new class of formal models, the Neural State Turing Machine (NSTM), which are interpretable and contains higher-order/tensor weights that can model any Turing Machine. To construct a Turing-complete NSTM, we incorporated a transition table and tape symbols directly into the model's $3^{rd}$ order weights. This construction can be done in the following way: 
\textbf{1)}
transition rules and tape symbols are directly encoded into the unbounded precision of state neurons, 
\textbf{2)} for bounded precision state neurons, an additional state (the halting state) is integrated to capture illegal settings.
\textbf{3)} using $n^{th}$ order feedforward connections in a bounded weight and precision NSTM, which is also Turing Complete.
The main contribution of this work is to analyze the evolution of an NSTM using tensor products between state and action neurons with and without recurrence. We proved the Turing completeness of a $6$-neuron unbounded precision recurrent neural network (RNN), which is the smallest Turing complete RNN of any order to date; prior results \cite{chung2021turing} show that for a $6$ state, $4$ symbol UTM, there exists a $40$-neuron unbounded precision RNN and $54$-neuron bounded precision RNN augmented with two stack like memory. 

We further analyzed the relationship between the number of neurons, the precision of NSTM, and the action neurons that control the tape when simulating a TM and show equivalence using the tensor product of ranks 1 and 2. Furthermore, we show that the differentiable version of the NSTM with $13$ bounded neurons obeys the associative and cumulative law and can simulate any TM in real-time $O(T)$. We later proved that a feedforward version of the NSTM can simulate any TM with bounded precision and weights using an nth-order connection; we use only $6$ neurons to show Turing completeness. Prior results with feedforward models, even with transformers, assume infinite weights and arbitrary precision. We believe this to be a substantial result. Furthermore, on Dyck languages, we experimentally demonstrated that the NSTM can be stably trained using real-time recurrent learning (RTRL) and can even outperform the Neural Turing Machine (NTM) \cite{graves2014neural} and transformers \cite{vaswani2017attention} when tested on longer strings. 

This work provides a theoretical bound for RNNs with finite precision and weights. We also provide a partial way to train these networks using a forward propagation approach such as RTRL \cite{williams1989rtrl}. There are several avenues for improvement, and we discuss these and their limitations in the Appendix. One computational benefit of the NSTM is that it requires fewer neurons and training samples to recognize complex grammars. However, optimizing tensor weights with backpropagation is challenging. Various techniques are needed to stabilize learning, as shown in a few prior efforts, \cite{mali2020neural}, such as stabilizing gradients, using discrete activations with different learning rules \cite{zeng1994discrete, williams1995gradient, yin2018understanding}, or using forward gradients \cite{kag2021training}. Also, training with RTRL is time-consuming and not scalable when applied to larger networks. Therefore, it is worth investigating computationally efficient approaches that serve as approximations of RTRL \cite{tallec2017uoro, menick2020practical}. However, our results in this work are mainly theoretical, but if one could train NSTMs efficiently, the potential negative societal impact would be biasedness. Since the NSTM operates based on rules obtained from computational models, the rules generated by the NSTM are heavily biased toward the data on which it is trained. Hence, it will be essential to design an ethical framework to avoid such data-specific bias in the future.

\section{Appendix C: On Limitations and Societal Impact}

We construct an empirical model that shows the advantage of the NSTM compared to other modern-day ANNs applied to problems in natural language processing. This work theoretically shows that the smallest neural network with bounded precision is Turing complete. However, there are several key avenues for improvement for NSTM. First, the current model is challenging to optimize using standard learning algorithms such as backpropagation (of errors) through time (BPTT) or truncated BPTT. Thus, it is crucial to design stable learning approaches that are computationally efficient and work effectively with higher-order connections. In this work, we trained our NSTM using real-time recurrent learning (RTRL), which works better than BPTT but is, unfortunately, computationally expensive and thus limits the scalability of NSTM. Second, modern regularization approaches such as recurrent dropout and layer norm do not work well with higher-order/tensor connections. Therefore, stabilizing gradients and avoiding overfitting becomes much more challenging with (memory-augmented) models such as the NSTM. Finally, modern/popular optimization update rules such as RMSprop or Adam do not function well with the NSTM; hence, all our models are trained with stochastic gradient descent. Given that this current study focused on deriving the true computational power of an RNN with bounded precision and weights, our future work will examine how to construct a stable empirical model compatible with mainstream optimization tools (or variations of them) that still works with complex algorithm patterns. 

Our work is primarily theoretical, but note that we have proposed a model that, if trained stably, marks an important step towards constructing interpretable neuro-symbolic intelligent systems from which explainable rules may be extracted. Such systems can be eventually used in high-risk domains such as clinical trials and medicine, functioning for the betterment of society. However, ensuring that an ethical framework is designed for properly evaluating the fairness and biasedness of the rules generated from these complex memory-augmented systems is essential.  

\end{document}